\newtheorem{theorem}{Assertion}
\newtheorem{corollary}[theorem]{Corollary}
\begin{document}
\sloppy
\date{} % No date
\title{A Critique of the CAP Theorem}
\author{Martin Kleppmann}
\maketitle

\subsection*{Abstract}

The \emph{CAP Theorem} is a frequently cited impossibility result in distributed systems, especially
among \emph{NoSQL} distributed databases. In this paper we survey some of the confusion about the
meaning of CAP, including inconsistencies and ambiguities in its definitions, and we highlight some
problems in its formalization. CAP is often interpreted as proof that eventually consistent
databases have better availability properties than strongly consistent databases; although there is
some truth in this, we show that more careful reasoning is required. These problems cast doubt on
the utility of CAP as a tool for reasoning about trade-offs in practical systems. As alternative to
CAP, we propose a \emph{delay-sensitivity} framework, which analyzes the sensitivity of operation
latency to network delay, and which may help practitioners reason about the trade-offs between
consistency guarantees and tolerance of network faults.

\section{Background}

Replicated databases maintain copies of the same data on multiple nodes, potentially in disparate
geographical locations, in order to tolerate faults (failures of nodes or communication links) and
to provide lower latency to users (requests can be served by a nearby site). However, implementing
reliable, fault-tolerant applications in a distributed system is difficult: if there are multiple
copies of the data on different nodes, they may be inconsistent with each other, and an application
that is not designed to handle such inconsistencies may produce incorrect results.

In order to provide a simpler programming model to application developers, the designers of
distributed data systems have explored various consistency guarantees that can be implemented by the
database infrastructure, such as linearizability~\cite{Herlihy1990jq}, sequential
consistency~\cite{Lamport1979ky}, causal consistency~\cite{Ahamad1995gl} and pipelined RAM
(PRAM)~\cite{Lipton1988uh}. When multiple processes execute operations on a shared storage
abstraction such as a database, a consistency model describes what values are allowed to be returned
by operations accessing the storage, depending on other operations executed previously or
concurrently, and the return values of those operations.

Similar concerns arise in the design of multiprocessor computers, which are not geographically
distributed, but nevertheless present inconsistent views of memory to different threads, due to the
various caches and buffers employed by modern CPU architectures. For example, x86 microprocessors
provide a level of consistency that is weaker than sequential, but stronger than causal
consistency~\cite{Sewell2010fj}. However, in this paper we focus our attention on distributed
systems that must tolerate partial failures and unreliable network links.

A strong consistency model like linearizability provides an easy-to-understand guarantee:
informally, all operations behave as if they executed atomically on a \emph{single copy} of the
data. However, this guarantee comes at the cost of reduced performance~\cite{Attiya1994gw} and fault
tolerance~\cite{Davidson1985hv} compared to weaker consistency models. In particular, as we discuss
in this paper, algorithms that ensure stronger consistency properties among replicas are more
sensitive to message delays and faults in the network. Many real computer networks are prone to
unbounded delays and lost messages~\cite{Bailis2014jx}, making the fault tolerance of distributed
consistency algorithms an important issue in practice.

A \emph{network partition} is a particular kind of communication fault that splits the network into
subsets of nodes such that nodes in one subset cannot communicate with nodes in another. As long as
the partition exists, any data modifications made in one subset of nodes cannot be visible to nodes
in another subset, since all messages between them are lost. Thus, an algorithm that maintains the
illusion of a single copy may have to delay operations until the partition is healed, to avoid the
risk of introducing inconsistent data in different subsets of nodes.

This trade-off was already known in the
1970s~\cite{Johnson1975we, Lindsay1979wv, Fischer1982hc, Davidson1985hv},
but it was rediscovered in the early
2000s, when the web's growing commercial popularity made geographic distribution and high
availability important to many organizations~\cite{Brewer2012tr, Vogels2008ey}. It was originally
called the \emph{CAP Principle} by Fox and Brewer~\cite{Fox1999bs, Brewer2000vd}, where CAP stands
for \emph{Consistency}, \emph{Availability} and \emph{Partition tolerance}. After the principle was
formalized by Gilbert and Lynch~\cite{Gilbert2002il, Gilbert2012bf} it became known as the
\emph{CAP Theorem}.

CAP became an influential idea in the NoSQL movement~\cite{Vogels2008ey}, and was adopted by
distributed systems practitioners to critique design decisions~\cite{Hodges2013tj}. It provoked a
lively debate about trade-offs in data systems, and encouraged system designers to challenge the
received wisdom that strong consistency guarantees were essential for databases~\cite{Brewer2012ba}.

The rest of this paper is organized as follows: in section~\ref{sec:definitions} we compare various
definitions of consistency, availability and partition tolerance. We then examine the formalization
of CAP by Gilbert and Lynch~\cite{Gilbert2002il} in section~\ref{sec:proofs}. Finally, in
section~\ref{sec:alternatives} we discuss some alternatives to CAP that are useful for reasoning
about trade-offs in distributed systems.

\section{CAP Theorem Definitions}\label{sec:definitions}

CAP was originally presented in the form of \emph{``consistency, availability, partition tolerance:
pick any two''} (i.e.\ you can have CA, CP or AP, but not all three). Subsequent debates concluded
that this formulation is misleading~\cite{Brewer2012ba, Hale2010we, Robinson2010tp}, because the
distinction between CA and CP is unclear, as detailed later in this section. Many authors now prefer
the following formulation: if there is no network partition, a system can be both consistent and
available; when a network partition occurs, a system must choose between either consistency (CP) or
availability (AP).

Some authors~\cite{Darcy2010ta, Liochon2015vt} define a CP system as one in which a majority of
nodes on one side of a partition can continue operating normally, and a CA system as one that may
fail catastrophically under a network partition (since it is designed on the assumption that
partitions are very rare). However, this definition is not universally agreed, since it is
counter-intuitive to label a system as ``available'' if it fails catastrophically under a partition,
while a system that continues partially operating in the same situation is labelled ``unavailable''
(see section~\ref{sec:partitions}).

Disagreement about the definitions of terms like \emph{availability} is the source of many
misunderstandings about CAP, and unclear definitions lead to problems with its formalization as a
theorem. In sections~\ref{sec:availability} to \ref{sec:partitions} we survey various definitions
that have been proposed.

\subsection{Availability}\label{sec:availability}

In practical engineering terms, \emph{availability} usually refers to the proportion of time during
which a service is able to successfully handle requests, or the proportion of requests that receive
a successful response. A response is usually considered successful if it is valid (not an error, and
satisfies the database's safety properties) and it arrives at the client within some timeout, which
may be specified in a \emph{service level agreement} (SLA). Availability in this sense is a metric
that is empirically observed during a period of a service's operation. A service may be available
(up) or unavailable (down) at any given time, but it is nonsensical to say that some software
package or algorithm is `available' or `unavailable' in general, since the uptime percentage is only
known in retrospect, after a period of operation (during which various faults may have occurred).

There is a long tradition of \emph{highly available} and \emph{fault-tolerant} systems, whose
algorithms are designed such that the system can remain available (up) even when some part of the
system is faulty, thus increasing the expected mean time to failure (MTTF) of the system as a whole.
Using such a system does not automatically make a service 100\% available, but it may increase the
observed availability during operation, compared to using a system that is not fault-tolerant.

\subsubsection{The A in CAP}\label{sec:a-in-cap}

Does the A in CAP refer to a property of an algorithm, or to an observed metric during system
operation? This distinction is unclear. Brewer does not offer a precise definition of availability,
but states that ``availability is obviously continuous from 0 to 100 percent''~\cite{Brewer2012ba},
suggesting an observed metric. Fox and Brewer also use the term \emph{yield} to refer to the
proportion of requests that are completed successfully~\cite{Fox1999bs} (without specifying any
timeout).

On the other hand, Gilbert and Lynch~\cite{Gilbert2002il} write: ``For a distributed system to be
continuously available, every request received by a non-failing node in the system must result in a
response''.\footnote{This sentence appears to define a property of \emph{continuous availability},
but the rest of the paper does not refer to this ``continuous'' aspect.} In order to prove a result
about systems in general, this definition interprets availability as a property of an algorithm, not
as an observed metric during system operation -- i.e.\ they define a system as being ``available''
or ``unavailable'' statically, based on its algorithms, not its operational status at some point in
time.

One particular execution of the algorithm is available if every request in that execution eventually
receives a response. Thus, an algorithm is ``available'' under Gilbert and Lynch's definition if
\emph{all} possible executions of the algorithm are available. That is, the algorithm must guarantee
that requests always result in responses, no matter what happens in the system (see
section~\ref{sec:systemmodel}).

Note that Gilbert and Lynch's definition requires \emph{any} non-failed node to be able to generate
valid responses, even if that node is completely isolated from the other nodes. This definition is
at odds with Fox and Brewer's original proposal of CAP, which states that ``data is considered
highly available if a given consumer of the data can always reach \emph{some}
replica''~\cite[emphasis original]{Fox1999bs}.

Many so-called highly available or fault-tolerant systems have very high uptime in practice, but are
in fact ``unavailable'' under Gilbert and Lynch's definition~\cite{Kim1984bl}: for example, in a
system with an elected leader or primary node, if a client that cannot reach the leader due to a
network fault, the client cannot perform any writes, even though it may be able to reach another
replica.

\subsubsection{No maximum latency}\label{sec:no-max-latency}

Note that Gilbert and Lynch's definition of availability does not specify any upper bound on
operation latency: it only requires requests to \emph{eventually} return a response within some
unbounded but finite time. This is convenient for proof purposes, but does not closely match our
intuitive notion of availability (in most situations, a service that takes a week to respond might
as well be considered unavailable).

This definition of availability is a pure liveness property, not a safety
property~\cite{Alpern1985dg}: that is, at any point in time, if the response has not yet arrived,
there is still hope that the availability property might still be fulfilled, because the response
may yet arrive -- it is never too late. This aspect of the definition will be important in
section~\ref{sec:proofs}, when we examine Gilbert and Lynch's proofs in more detail.

(In section~\ref{sec:alternatives} we will discuss a definition of availability that takes latency
into account.)

\subsubsection{Failed nodes}\label{sec:failed-node-exception}

Another noteworthy aspect of Gilbert and Lynch's definition of availability is the proviso of
applying only to \emph{non-failed} nodes. This allows the aforementioned definition of a CA system
as one that fails catastrophically if a network partition occurs: if the partition causes all nodes
to fail, then the availability requirement does not apply to any nodes, and thus it is trivially
satisfied, even if no node is able to respond to any requests. This definition is logically sound,
but somewhat counter-intuitive.

\subsection{Consistency}\label{sec:consistency}

Consistency is also an overloaded word in data systems: consistency in the sense of ACID is a very
different property from consistency in CAP~\cite{Brewer2012ba}. In the distributed systems
literature, consistency is usually understood as not one particular property, but as a spectrum of
models with varying strengths of guarantee. Examples of such consistency models include
linearizability~\cite{Herlihy1990jq}, sequential consistency~\cite{Lamport1979ky}, causal
consistency~\cite{Ahamad1995gl} and PRAM~\cite{Lipton1988uh}.

There is some similarity between consistency models and \emph{transaction isolation models} such as
serializability~\cite{Bernstein1987va}, snapshot isolation~\cite{Berenson1995kj}, repeatable read
and read committed~\cite{Adya1999tx, Gray1976us}. Both describe restrictions on the values that
operations may return, depending on other (prior or concurrent) operations. The difference is that
transaction isolation models are usually formalized assuming a single replica, and operate at the
granularity of transactions (each transaction may read or write multiple objects). Consistency
models assume multiple replicas, but are usually defined in terms of single-object operations (not
grouped into transactions). Bailis et al.~\cite{Bailis2014vc} demonstrate a unified framework for
reasoning about both distributed consistency and transaction isolation in terms of CAP.

\subsubsection{The C in CAP}\label{sec:c-in-cap}

Fox and Brewer~\cite{Fox1999bs} define the C in CAP as one-copy serializability
(1SR)~\cite{Bernstein1987va}, whereas Gilbert and Lynch~\cite{Gilbert2002il} define it as
linearizability. Those definitions are not identical, but fairly similar.\footnote{Linearizability
is a recency guarantee, whereas 1SR is not. 1SR requires isolated execution of multi-object
transactions, which linearizability does not. Both require \emph{coordination}, in the sense of
section~\ref{sec:disconnected}~\cite{Bailis2014vc}.} Both are safety properties~\cite{Alpern1985dg},
i.e.\ restrictions on the possible executions of the system, ensuring that certain situations never
occur.

In the case of linearizability, the situation that may not occur is a \emph{stale read}: stated
informally, once a write operation has completed or some read operation has returned a new value,
all following read operations must return the new value, until it is overwritten by another write
operation. Gilbert and Lynch observe that if the write and read operations occur on different nodes,
and those nodes cannot communicate during the time when those operations are being executed, then
the safety property cannot be satisfied, because the read operation cannot know about the value
written.

The C of CAP is sometimes referred to as \emph{strong consistency} (a term that is not formally
defined), and contrasted with \emph{eventual consistency}~\cite{Terry1994fp, Vogels2008ey, Bailis2013jc},
which is often regarded as the weakest level of consistency that is useful to applications. Eventual
consistency means that if a system stops accepting writes and sufficient\footnote{It is not clear
what amount of communication is `sufficient'. A possible formalization would be to require all
replicas to converge to the same value within finite time, assuming fair-loss links (see section
\ref{sec:fairloss}).} communication occurs, then eventually all replicas will converge to the same
value. However, as the aforementioned list of consistency models indicates, it is overly simplistic
to cast `strong' and eventual consistency as the only possible choices.

\subsubsection{Probabilistic consistency}

It is also possible to define consistency as a quantitative metric rather than a safety property.
For example, Fox and Brewer~\cite{Fox1999bs} define \emph{harvest} as ``the fraction of the data
reflected in the response, i.e.\ the completeness of the answer to the query,'' and
probabilistically bounded staleness~\cite{Bailis2012to} studies the probability of a read returning
a stale value, given various assumptions about the distribution of network latencies. However, these
stochastic definitions of consistency are not the subject of CAP.

\subsection{Partition Tolerance}\label{sec:partitions}

A \emph{network partition} has long been defined as a communication failure in which the network is
split into disjoint sub-networks, with no communication possible across
sub-networks~\cite{Johnson1975we}. This is a fairly narrow class of fault, but it does occur in
practice~\cite{Bailis2014jx}, so it is worth studying.

\subsubsection{Assumptions about system model}\label{sec:systemmodel}

It is less clear what \emph{partition tolerance} means. Gilbert and Lynch~\cite{Gilbert2002il}
define a system as partition-tolerant if it continues to satisfy the consistency and availability
properties in the presence of a partition. Fox and Brewer~\cite{Fox1999bs} define
\emph{partition-resilience} as ``the system as whole can survive a partition between data replicas''
(where \emph{survive} is not defined).

At first glance, these definitions may seem redundant: if we say that an algorithm provides some
guarantee (e.g. linearizability), then we expect \emph{all} executions of the algorithm to satisfy
that property, regardless of the faults that occur during the execution.

However, we can clarify the definitions by observing that the correctness of a distributed algorithm
is always subject to assumptions about the faults that may occur during its execution. If you take
an algorithm that assumes fair-loss links and crash-stop processes, and subject it to Byzantine
faults, the execution will most likely violate safety properties that were supposedly guaranteed.
These assumptions are typically encoded in a \emph{system model}, and non-Byzantine system models
rule out certain kinds of fault as impossible (so algorithms are not expected to tolerate them).

Thus, we can interpret \emph{partition tolerance} as meaning ``a network partition is among the
faults that are assumed to be possible in the system.'' Note that this definition of partition
tolerance is a statement about the system model, whereas consistency and availability are properties
of the possible executions of an algorithm. It is misleading to say that an algorithm ``provides
partition tolerance,'' and it is better to say that an algorithm ``assumes that partitions may
occur.''

If an algorithm assumes the absence of partitions, and is nevertheless subjected to a partition, it
may violate its guarantees in arbitrarily undefined ways (including failing to respond even after
the partition is healed, or deleting arbitrary amounts of data). Even though it may seem that such
arbitrary failure semantics are not very useful, various systems exhibit such behavior in
practice~\cite{Kingsbury2014tk, Kingsbury2015uk}. Making networks highly reliable is very
expensive~\cite{Bailis2014jx}, so most distributed programs must assume that partitions will occur
sooner or later~\cite{Hale2010we}.

\subsubsection{Partitions and fair-loss links}\label{sec:fairloss}

Further confusion arises due to the fact that network partitions are only one of a wide range of
faults that can occur in distributed systems, including nodes failing or restarting, nodes pausing
for some amount of time (e.g.\ due to garbage collection), and loss or delay of messages in the
network. Some faults can be modeled in terms of other faults (for example, Gilbert and Lynch state
that the loss of an individual message can be modeled as a short-lived network partition).

In the design of distributed systems algorithms, a commonly assumed system model is \emph{fair-loss
links}~\cite{Cachin2011wt}. A network link has the fair-loss property if the probability of a
message \emph{not} being lost is non-zero, i.e.\ the link sometimes delivers messages. The link may
have intervals of time during which all messages are dropped, but those intervals must be of finite
duration. On a fair-loss link, message delivery can be made reliable by retrying a message an
unbounded number of times: the message is guaranteed to be eventually delivered after a finite
number of attempts~\cite{Cachin2011wt}.

We argue that fair-loss links are a good model of most networks in practice: faults occur
unpredictably; messages are lost while the fault is occurring; the fault lasts for some finite
duration (perhaps seconds, perhaps hours), and eventually it is healed (perhaps after human
intervention). There is no malicious actor in the network who can cause systematic message loss over
unlimited periods of time -- such malicious actors are usually only assumed in the design of
Byzantine fault tolerant algorithms.

Is ``partitions may occur'' equivalent to assuming fair-loss links? Gilbert and
Lynch~\cite{Gilbert2002il} define partitions as ``the network will be allowed to lose arbitrarily
many messages sent from one node to another.'' In this definition it is unclear whether the number
of lost messages is unbounded but finite, or whether it is potentially infinite.

Partitions of a finite duration are possible with fair-loss links, and thus an algorithm that is
correct in a system model of fair-loss links can tolerate partitions of a finite duration.
Partitions of an infinite duration require some further thought, as we shall see in
section~\ref{sec:proofs}.

\section{The CAP Proofs}\label{sec:proofs}

In this section, we build upon the discussion of definitions in the last section, and examine the
proofs of the theorems of Gilbert and Lynch~\cite{Gilbert2002il}. We highlight some ambiguities in
the reasoning of the proofs, and then suggest a more precise formalization.

\subsection{Theorems 1 and 2}\label{sec:theorem1}

Gilbert and Lynch's Theorem 1 is stated as follows:

\emph{It is impossible in the asynchronous network model to implement a read/write data object that
guarantees the following properties:}
\begin{itemize}
\item \emph{Availability}
\item \emph{Atomic consistency}\footnote{In this context, \emph{atomic consistency} is synonymous
with linearizability, and it is unrelated to the A in ACID.}
\end{itemize}
\emph{in all fair executions (including those in which messages are lost).}

Theorem 2 is similar, but specified in a system model with bounded network delay. The discussion in
this section~\ref{sec:theorem1} applies to both theorems.

\subsubsection{Availability of failed nodes}\label{sec:failed-node-trivial}

The first problem with this proof is the definition of availability. As discussed in
section~\ref{sec:failed-node-exception}, only non-failing nodes are required to respond.

If it is possible for the algorithm to declare nodes as failed (e.g.\ if a node may crash itself),
then the availability property can be trivially satisfied: all nodes can be crashed, and thus no
node is required to respond. Of course, such an algorithm would not be useful in practice.
Alternatively, if a minority of nodes is permanently partitioned from the majority, an algorithm
could define the nodes in the minority partition as failed (by crashing them), while the majority
partition continues implementing a linearizable register~\cite{Attiya1995bm}.

This is not the intention of CAP -- the raison d'\^{e}tre of CAP is to characterize systems in which
a minority partition can continue operating independently of the rest -- but the present
formalization of availability does not exclude such trivial solutions.

\subsubsection{Finite and infinite partitions}

Gilbert and Lynch's proofs of theorems 1 and 2 construct an execution of an algorithm $A$ in which a
write is followed by a read, while simultaneously a partition exists in the network. By showing that
the execution is not linearizable, the authors derive a contradiction.

Note that this reasoning is only correct if we assume a system model in which partitions may have
infinite duration.

If the system model is based on fair-loss links, then all partitions may be assumed to be of
unbounded but finite duration (section~\ref{sec:fairloss}). Likewise, Gilbert and Lynch's
availability property does not place any upper bound on the duration of an operation, as long as it
is finite (section~\ref{sec:no-max-latency}). Thus, if a linearizable algorithm encounters a
network partition in a fair-loss system model, it is acceptable for the algorithm to simply wait for
the partition to be healed: at any point in time, there is still hope that the partition will be
healed in future, and so the availability property may yet be satisfied. For example, the ABD
algorithm~\cite{Attiya1995bm} can be used to implement a linearizable read-write register in an
asynchronous network with fair-loss links.\footnote{ABD~\cite{Attiya1995bm} is an algorithm for a
single-writer multi-reader register. It was extended to the multi-writer case by Lynch and
Shvartsman~\cite{Lynch1997gr}.}

On the other hand, in an execution where a partition of infinite duration occurs, the algorithm is
forced to make a choice between waiting until the partition heals (which never happens, thus
violating availability) and exhibiting the execution in the proof of Theorem 1 (thus violating
linearizability). We can conclude that Theorem 1 is only valid in a system model where infinite
partitions are possible.

\subsubsection{Linearizability vs. eventual consistency}

Note that in the case of an infinite partition, no information can ever flow from one sub-network to
the other. Thus, even eventual consistency (replica convergence in finite time, see
section~\ref{sec:c-in-cap}) is not possible in a system with an infinite partition.

Theorem 1 demonstrated that in a system model with infinite partitions, no algorithm exists which
ensures linearizability and availability in all executions. However, we can also see that in the
same system model, no algorithm exists which ensures eventual consistency in all executions.

The CAP theorem is often understood as demonstrating that linearizability cannot be achieved with
high availability, whereas eventual consistency can. However, the results so far do not
differentiate between linearizable and eventually consistent algorithms: both are possible if
partitions are always finite, and both are impossible in a system model with infinite partitions.

To distinguish between linearizability and eventual consistency, a more careful formalization of CAP
is required, which we give in section~\ref{sec:partitionable-model}.

\subsection{The partitionable system model}\label{sec:partitionable-model}

In this section we suggest a more precise formulation of CAP, and derive a result similar to Gilbert
and Lynch's Theorem 1 and Corollary 1.1. This formulation will help us gain a better understanding
of CAP and its consequences.

\subsubsection{Definitions}

Define a \emph{partitionable link} to be a point-to-point link with the following properties:
\begin{enumerate}
    \item \emph{No duplication:} If a process $p$ sends a message $m$ once to process $q$, then $m$
        is delivered at most once by $q$.
    \item \emph{No creation:} If some process $q$ delivers a message $m$ with sender $p$, then $m$
        was previously sent to $q$ by process $p$.
\end{enumerate}
(A partitionable link is allowed to drop an infinite number of messages and cause unbounded message
delay.)

Define the \emph{partitionable model} as a system model in which processes can only communicate via
partitionable links, in which processes never crash,\footnote{The assumption that processes never
crash is of course unrealistic, but it makes the impossibility results in
section~\ref{sec:impossibility} stronger. It also rules out the trivial solution of
section~\ref{sec:failed-node-trivial}.} and in which every process has access to a local clock that
is able to generate timeouts (the clock progresses monotonically at a rate approximately equal to
real time, but clocks of different processes are not synchronized).

Define an execution $E$ as \emph{admissible} in a system model $M$ if the processes and links in $E$
satisfy the properties defined by $M$.

Define an algorithm $A$ as \emph{terminating} in a system model $M$ if, for every execution $E$ of
$A$, if $E$ is admissible in $M$, then every operation in $E$ terminates in finite
time.\footnote{Our definition of \emph{terminating} corresponds to Gilbert and Lynch's definition of
\emph{available}. We prefer to call it \emph{terminating} because the word \emph{available} is
widely understood as referring to an empirical metric (see section~\ref{sec:availability}). There is
some similarity to \emph{wait-free} data structures~\cite{Herlihy1988br}, although these usually
assume reliable communication and unreliable processes.}

Define an execution $E$ as \emph{loss-free} if for every message $m$ sent from $p$ to $q$ during
$E$, $m$ is eventually delivered to $q$. (There is no delay bound on delivery.)

An execution $E$ is \emph{partitioned} if it is not loss-free. Note: we may assume that links
automatically resend lost messages an unbounded number of times. Thus, an execution in which
messages are transiently lost during some finite time period is not partitioned, because the links
will eventually deliver all messages that were lost. An execution is only partitioned if the message
loss persists forever.

For the definition of \emph{linearizability} we refer to Herlihy and Wing~\cite{Herlihy1990jq}.

There is no generally agreed formalization of \emph{eventual consistency}, but the following
corresponds to a liveness property that has been proposed~\cite{Bailis2013jc, Attiya2015dm}:
eventually, every read operation $\mathit{read}(q)$ at process $q$ must return a set of all the
values $v$ ever written by any process $p$ in an operation $\mathit{write}(p, v)$. For simplicity,
we assume that values are never removed from the read set, although an application may only see one
of the values (e.g.\ the one with the highest timestamp).

More formally, an infinite execution $E$ is \emph{eventually consistent} if, for all processes $p$
and $q$, and for every value $v$ such that operation $\mathit{write}(p, v)$ occurs in $E$, there are
only finitely many operations in $E$ such that $v \notin \mathit{read}(q)$.

\subsubsection{Impossibility results}\label{sec:impossibility}

\begin{theorem}\label{th:1}
If an algorithm $A$ implements a terminating read-write register $R$ in the partitionable model,
then there exists a loss-free execution of $A$ in which $R$ is not linearizable.
\end{theorem}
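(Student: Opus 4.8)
The plan is to adapt the classical Gilbert--Lynch partition argument, but with one extra move tailored to the partitionable model: after forcing the register to commit to a stale answer under a partition, \emph{heal} the partition by delivering every message that was held back. This is legitimate because a partitionable link may cause \emph{unbounded delay}, not merely unbounded loss, so an execution in which messages across the cut are delayed for a long time but eventually delivered is still loss-free; and because \emph{terminating} places no bound on operation latency while linearizability is a safety property, already violated by a finite prefix of an execution (cf.\ sections~\ref{sec:no-max-latency} and~\ref{sec:fairloss}).

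First I would fix two distinct processes $p$ and $q$ and a value $v_1$ different from the register's initial value $v_0$, and build an admissible execution $E_1$ of $A$ in which the processes are split into groups $G_p \ni p$ and $G_q \ni q$, with every message across the cut dropped forever and the links inside each group reliable. Process $p$ invokes $\mathit{write}(p, v_1)$ at time $0$; since $A$ is terminating and $E_1$ is admissible, this operation returns at some finite time $T_w$. Process $q$ then invokes a read at time $T_w + 1$, which likewise returns at some finite time $T_r$ with value $u$; these are the only two operations in $E_1$. Because no message ever crosses from $G_p$ to $G_q$, the behaviour of $G_q$ is identical to its behaviour in an execution where $p$ issues no write at all; in that no-write execution the read returns $v_0$ (if it returned anything else we would already be done, by healing that execution into a loss-free one as in the next paragraph), so in $E_1$ the read also returns $u = v_0$. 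As $\mathit{write}(p, v_1)$ is the only write in $E_1$ and precedes the read in real time, linearizability would require $u = v_1$; since $v_1 \ne v_0$, $E_1$ is not linearizable.

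Finally I would convert $E_1$ into the loss-free execution demanded by Assertion~\ref{th:1}. Let $E_2$ agree with $E_1$ exactly up to and including time $T_r$ -- the same invocations, message deliveries, and responses -- but in $E_2$ every message sent across the cut during $[0, T_r]$ is not dropped, merely delayed, and is delivered at some time after $T_r$; after $T_r$ no operations are invoked and the algorithm runs freely. Then $E_2$ is admissible in the partitionable model (the link axioms require only no creation and no duplication, processes never crash, and unbounded delay is permitted) and it is loss-free, since every message is eventually delivered. A routine indistinguishability argument shows that during $[0, T_r]$ each process receives in $E_2$ exactly the messages it received in $E_1$, so the write still returns at $T_w$ and the read still returns $u \ne v_1$; the finite prefix of $E_2$ therefore witnesses the same violation of linearizability, and $E_2$ is the execution we want. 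The step I expect to be the main obstacle is the linearizability argument of the preceding paragraph -- pinning down that the read's response is genuinely forced by $G_q$'s partition-side view and cannot accidentally be consistent with register semantics. I would also remark that $E_2$ \emph{is} eventually consistent -- once the delayed write arrives, all later reads include $v_1$ -- which is precisely why no analogue of this result holds for eventual consistency, matching section~\ref{sec:impossibility}.
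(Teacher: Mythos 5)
Your proposal is correct and follows essentially the same route as the paper's proof: force a stale read by withholding all cross-partition messages during the write and the subsequent read, then obtain a loss-free execution by delivering every withheld message after both operations have terminated, which cannot retroactively affect the already-completed operations. Your version merely adds a more explicit indistinguishability argument for why the read must return the initial value, where the paper simply asserts that $q$ cannot know the written value.
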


\begin{proof}
Consider an execution $E_1$ in which the initial value of $R$ is $v_1$, and no messages are
delivered (all messages are lost, which is admissible for partitionable links). In $E_1$, $p$ first
performs an operation $\mathit{write}(p, v_2)$ where $v_2 \neq v_1$. This operation must terminate
in finite time due to the termination property of $A$.

After the write operation terminates, $q$ performs an operation $\mathit{read}(q)$, which must
return $v_1$, since there is no way for $q$ to know the value $v_2$, due to all messages being lost.
The read must also terminate. This execution is not linearizable, because the read did not return
$v_2$.

Now consider an execution $E_2$ which extends $E_1$ as follows: after the termination of the
$\mathit{read}(q)$ operation, every message that was sent during $E_1$ is delivered (this is
admissible for partitionable links). These deliveries cannot affect the execution of the write and
read operations, since they occur after the termination of both operations, so $E_2$ is also
non-linearizable. Moreover, $E_2$ is loss-free, since every message was delivered.
\end{proof}

\begin{corollary}\label{corr:2}
There is no algorithm that implements a terminating read-write register in the partitionable model
that is linearizable in all loss-free executions.
\end{corollary}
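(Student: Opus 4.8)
The plan is to derive this as an immediate contrapositive of Theorem~\ref{th:1}, so the proof will be a short argument by contradiction rather than a fresh construction.

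First I would suppose, for the sake of contradiction, that such an algorithm $A$ exists: that is, $A$ implements a read-write register $R$ in the partitionable model, $A$ is terminating in that model, and $R$ is linearizable in every loss-free execution of $A$. The key observation is that the hypothesis of Theorem~\ref{th:1} is exactly ``$A$ implements a terminating read-write register $R$ in the partitionable model,'' which is precisely what we have assumed.

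Next I would invoke Theorem~\ref{th:1} directly on $A$. It yields a loss-free execution $E_2$ of $A$ in which $R$ is not linearizable. But this flatly contradicts the assumption that $R$ is linearizable in \emph{all} loss-free executions of $A$. Hence no such algorithm can exist, which is the claim.

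I do not expect any real obstacle here: the corollary is essentially a logical reformulation of the theorem, and the only point requiring a moment's care is checking that the quantifier structure lines up --- Theorem~\ref{th:1} asserts the \emph{existence} of one bad loss-free execution under the terminating hypothesis, and the corollary negates the conjunction of ``terminating'' with ``linearizable on every loss-free execution,'' so a single counterexample execution suffices. No new system-model reasoning or new execution needs to be built.
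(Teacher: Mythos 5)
Your proposal is correct and matches the paper's own argument, which likewise derives the corollary directly from the existence of the loss-free, non-linearizable execution guaranteed by Assertion~\ref{th:1}. The quantifier check you mention is exactly the content of the paper's one-line justification.
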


This corresponds to Gilbert and Lynch's Corollary~1, and follows directly from the existence of a
loss-free, non-linearizable execution (assertion~\ref{th:1}).

\begin{theorem}\label{th:3}
There is no algorithm that implements a terminating read-write register in the partitionable model
that is eventually consistent in all executions.
\end{theorem}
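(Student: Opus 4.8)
The plan is to mimic the construction in the proof of Assertion~\ref{th:1}, but adapted to the liveness nature of eventual consistency: instead of a single bad \emph{finite} execution, I would exhibit a single \emph{infinite} execution in which the partition never heals and in which some written value is never observed by a reader. Concretely, take two processes $p$ and $q$ and an execution $E$ in which the initial value of $R$ is $v_1$ and \emph{every} message is lost (this is admissible for partitionable links, which may drop infinitely many messages). In $E$, $p$ performs $\mathit{write}(p, v_2)$ with $v_2 \neq v_1$; by the termination property of $A$ this operation completes in finite time. After it completes, $q$ performs an infinite sequence of $\mathit{read}(q)$ operations, each of which completes in finite time, again by the termination property (processes never crash, so $q$ stays alive to issue and finish them). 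This yields an infinite execution of $A$ that is admissible in the partitionable model.

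Next I would argue that no $\mathit{read}(q)$ in $E$ returns a set containing $v_2$, via an indistinguishability argument. Since every message from $p$ to $q$ is lost in $E$, the local history of $q$ in $E$ is identical to its local history in an execution $\tilde{E}$ that agrees with $E$ except that $p$ performs no write at all. In $\tilde{E}$ the value $v_2$ is never written by any process, so a correct register implementation cannot return a read set containing $v_2$ there; because $q$ cannot distinguish $E$ from $\tilde{E}$, it returns the same sets in $E$, hence $v_2 \notin \mathit{read}(q)$ for every read of $q$ in $E$. But $\mathit{write}(p, v_2)$ does occur in $E$, and infinitely many read operations of $q$ omit $v_2$, so $E$ is not eventually consistent in the sense of the definition above, contradicting the assumption that $A$ is eventually consistent in all executions.

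The step I expect to require the most care is the indistinguishability argument, specifically ruling out the degenerate possibility that $q$ somehow ``guesses'' $v_2$ or returns an ever-growing set of candidate values: this is exactly where the assumption that $A$ genuinely implements a read-write register — so that a read reports values that were actually written, rather than arbitrary data — does the real work, and it should be made explicit (alternatively, one fixes $v_2$ adversarially to be a value that appears nowhere in $q$'s determined local state). A secondary point worth checking is that the infinite execution — one write at $p$, then infinitely many reads at $q$, and no message deliveries — is genuinely admissible: this follows because partitionable links permit an infinite number of dropped messages and unbounded delay, and the no-crash assumption keeps $q$ available to perform its reads. Finally, I would remark that, in contrast to Assertion~\ref{th:1} (whose witness was a loss-free execution), this construction essentially \emph{requires} an infinite partitioned execution, matching the earlier observation that eventual consistency, like linearizability, becomes impossible only once infinite partitions are admitted.
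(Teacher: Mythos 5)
Your construction is exactly the paper's proof of Assertion~\ref{th:3}: an admissible execution with all messages lost, a terminating $\mathit{write}(p,v_2)$ followed by infinitely many terminating reads at $q$, none of which can return $v_2$. The only difference is that you make the indistinguishability argument explicit where the paper simply asserts that ``$q$ can never learn about the value written,'' which is a welcome refinement rather than a divergence.
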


\begin{proof}
Consider an execution $E$ in which no messages are delivered, and in which process $p$ performs
operation $\mathit{write}(p, v)$. This write must terminate, since the algorithm is terminating.
Process $q$ with $p \neq q$ performs $\mathit{read}(q)$ infinitely many times. However, since no
messages are delivered, $q$ can never learn about the value written, so $\mathit{read}(q)$ never
returns $v$. Thus, $E$ is not eventually consistent.
\end{proof}

\subsubsection{Opportunistic properties}

Note that corollary~\ref{corr:2} is about loss-free executions, whereas assertion~\ref{th:3} is
about \emph{all} executions. If we limit ourselves to loss-free executions, then eventual
consistency \emph{is} possible (e.g.\ by maintaining a replica at each process, and broadcasting
every write to all processes).

However, everything we have discussed in this section pertains to the partitionable model, in which
we cannot assume that all executions are loss-free. For clarity, we should specify the properties of
an algorithm such that they hold for \emph{all} admissible executions of a given system model, not
only selected executions.

To this end, we can transform a property $\mathcal{P}$ into an \emph{opportunistic} property
$\mathcal{P}^\prime$ such that:
$$\forall E \mathbin{:} (E \models \mathcal{P}^\prime) \Leftrightarrow (\mathit{lossfree}(E) \Rightarrow (E \models \mathcal{P}))$$
or, equivalently:
$$\forall E \mathbin{:} (E \models \mathcal{P}^\prime) \Leftrightarrow (\mathit{partitioned}(E) \vee (E \models \mathcal{P}))\,.$$
In other words, $\mathcal{P}^\prime$ is trivially satisfied for executions that are partitioned.
Requiring $\mathcal{P}^\prime$ to hold for all executions is equivalent to requiring $\mathcal{P}$
to hold for all loss-free executions.

Hence we define an execution $E$ as \emph{opportunistically eventually consistent} if $E$ is
partitioned or if $E$ is eventually consistent. (This is a weaker liveness property than eventual
consistency.)

Similarly, we define an execution $E$ as \emph{opportunistically terminating linearizable} if $E$ is
partitioned, or if $E$ is linearizable and every operation in $E$ terminates in finite time.

From the results above, we can see that opportunistic terminating linearizability is impossible in
the partitionable model (corollary~\ref{corr:2}), whereas opportunistic eventual consistency is
possible. This distinction can be understood as the key result of CAP. However, it is arguably not a
very interesting or insightful result.

\subsection{Mismatch between formal model and practical systems}

Many of the problems in this section are due to the fact that availability is defined by Gilbert and
Lynch as a liveness property (section~\ref{sec:no-max-latency}). Liveness properties make statements
about something happening \emph{eventually} in an infinite execution, which is confusing to
practitioners, since real systems need to get things done in a finite (and usually short) amount of
time.

Quoting Lamport~\cite{Lamport2000bi}: ``Liveness properties are inherently problematic. The question
of whether a real system satisfies a liveness property is meaningless; it can be answered only by
observing the system for an infinite length of time, and real systems don't run forever. Liveness is
always an approximation to the property we really care about. We want a program to terminate within
100 years, but proving that it does would require the addition of distracting timing assumptions.
So, we prove the weaker condition that the program eventually terminates. This doesn't prove that
the program will terminate within our lifetimes, but it does demonstrate the absence of infinite
loops.''

Brewer~\cite{Brewer2012ba} and some commercial database vendors~\cite{Aerospike2014wa} state that
``all three properties [consistency, availability, and partition tolerance] are more continuous than
binary''. This is in direct contradiction to Gilbert and Lynch's formalization of CAP (and our
restatement thereof), which expresses consistency and availability as safety and liveness properties
of an algorithm, and partitions as a property of the system model. Such properties either hold or
they do not hold; there is no degree of continuity in their definition.

Brewer's informal interpretation of CAP is intuitively appealing, but it is not a theorem, since it
is not expressed formally (and thus cannot be proved or disproved) -- it is, at best, a rule of
thumb. Gilbert and Lynch's formalization can be proved correct, but it does not correspond to
practitioners' intuitions for real systems. This contradiction suggests that although the formal
model may be true, it is not useful.

\section{Alternatives to CAP}\label{sec:alternatives}

In section~\ref{sec:definitions} we explored the definitions of the terms \emph{consistency},
\emph{availability} and \emph{partition tolerance}, and noted that a wide range of ambiguous and
mutually incompatible interpretations have been proposed, leading to widespread confusion. Then,
in section~\ref{sec:proofs} we explored Gilbert and Lynch's definitions and proofs in more detail,
and highlighted some problems with the formalization of CAP.

All of these misunderstandings and ambiguity lead us to asserting that CAP is no longer an
appropriate tool for reasoning about systems. A better framework for describing trade-offs is
required. Such a framework should be simple to understand, match most people's intuitions, and use
definitions that are formal and correct.

In the rest of this paper we develop a first draft of an alternative framework called
\emph{delay-sensitivity}, which provides tools for reasoning about trade-offs between consistency
and robustness to network faults. It is based on to several existing results from the distributed
systems literature (most of which in fact predate CAP).

\subsection{Latency and availability}\label{sec:latency-availability}

As discussed in section~\ref{sec:no-max-latency}, the latency (response time) of operations is often
important in practice, but it is deliberately ignored by Gilbert and Lynch.

The problem with latency is that it is more difficult to model. Latency is influenced by many
factors, especially the delay of packets on the network. Many computer networks (including Ethernet
and the Internet) do not guarantee bounded delay, i.e.\ they allow packets to be delayed
arbitrarily. Latencies and network delays are therefore typically described as probability
distributions.

On the other hand, network delay can model a wide range of faults. In network protocols that
automatically retransmit lost packets (such as TCP), transient packet loss manifests itself to the
application as temporarily increased delay. Even when the period of packet loss exceeds the TCP
connection timeout, application-level protocols often retry failed requests until they succeed, so
the effective latency of the operation is the time from the first attempt until the successful
completion. Even network partitions can be modelled as large packet delays (up to the duration of
the partition), provided that the duration of the partition is finite and lost packets are
retransmitted an unbounded number of times.

Abadi~\cite{Abadi2012hb} argues that there is a trade-off between consistency and latency, which
applies even when there is no network partition, and which is as important as the
consistency/availability trade-off described by CAP. He proposes a ``PACELC'' formulation to reason
about this trade-off.

We go further, and assert that availability should be modeled in terms of operation latency. For
example, we could define the availability of a service as the proportion of requests that meet some
latency bound (e.g.\ returning successfully within 500~ms, as defined by an SLA). This
empirically-founded definition of availability closely matches our intuitive understanding.

We can then reason about a service's tolerance of network problems by analyzing how operation
latency is affected by changes in network delay, and whether this pushes operation latency over the
limit set by the SLA. If a service can sustain low operation latency, even as network delay
increases dramatically, it is more tolerant of network problems than a service whose latency
increases.

\subsection{How operation latency depends on network delay}\label{sec:network-dependence}

To find a replacement for CAP with a latency-centric viewpoint we need to examine how operation
latency is affected by network latency at different levels of consistency. In practice, this depends
on the algorithms and implementation of the particular software being used. However, CAP
demonstrated that there is also interest in theoretical results identifying the fundamental limits
of what can be achieved, regardless of the particular algorithm in use.

Several existing impossibility results establish lower bounds on the operation latency as a function
of the network delay $d$. These results show that any algorithm guaranteeing a particular level of
consistency cannot perform operations faster than some lower bound. We summarize these results in
table~\ref{tab:op-latency} and in the following sections.

Our notation is similar to that used in complexity theory to describe the running time of an
algorithm. However, rather than being a function of the size of input, we describe the latency of an
operation as a function of network delay.

In this section we assume unbounded network delay, and unsynchronized clocks (i.e.\ each process has
access to a clock that progresses monotonically at a rate approximately equal to real time, but the
synchronization error between clocks is unbounded).

\begin{table}
    \centering
    \begin{tabular}{ccc}
        Consistency level      & \textit{write} & \textit{read} \\
                               & latency        & latency \\[3pt] \hline \noalign{\vspace{6pt}}
        linearizability        & $O(d)$         & $O(d)$  \\[3pt]
        sequential consistency & $O(d)$         & $O(1)$  \\[3pt]
        causal consistency     & $O(1)$         & $O(1)$  \\[3pt] \hline
    \end{tabular}
    \caption{Lowest possible operation latency at various consistency levels, as a function of
    network delay $d$.}
    \label{tab:op-latency}
\end{table}

\subsubsection{Linearizability}

Attiya and Welch~\cite{Attiya1994gw} show that any algorithm implementing a linearizable read-write
register must have an operation latency of at least $u/2$, where $u$ is the uncertainty of delay in
the network between replicas.\footnote{Attiya and Welch~\cite{Attiya1994gw} originally proved a
bound of $u/2$ for write operations (assuming two writer processes and one reader), and a bound of
$u/4$ for read operations (two readers, one writer). The $u/2$ bound for read operations is due to
Mavronicolas and Roth~\cite{Mavronicolas1999eb}.}

In this proof, network delay is assumed to be at most $d$ and at least $d-u$, so $u$ is the
difference between the minimum and maximum network delay. In many networks, the maximum possible
delay (due to network congestion or retransmitting lost packets) is much greater than the minimum
possible delay (due to the speed of light), so $u \approx d$. If network delay is unbounded,
operation latency is also unbounded.

For the purposes of this survey, we can simplify the result to say that linearizability requires the
latency of read and write operations to be proportional to the network delay $d$. This is indicated
in table~\ref{tab:op-latency} as $O(d)$ latency for reads and writes. We call these operations
\emph{delay-sensitive}, as their latency is sensitive to changes in network delay.

\subsubsection{Sequential consistency}

Lipton and Sandberg~\cite{Lipton1988uh} show that any algorithm implementing a sequentially
consistent read-write register must have $|r| + |w| \geq d$, where $|r|$ is the latency of a read
operation, $|w|$ is the latency of a write operation, and $d$ is the network delay. Mavronicolas
and Roth~\cite{Mavronicolas1999eb} further develop this result.

This lower bound provides a degree of choice for the application: for example, an application that
performs more reads than writes can reduce the average operation latency by choosing $|r| = 0$ and
$|w| \geq d$, whereas a write-heavy application might choose $|r| \geq d$ and $|w| = 0$. Attiya and
Welch~\cite{Attiya1994gw} describe algorithms for both of these cases (the $|r| = 0$ case is similar
to the Zab algorithm used by Apache ZooKeeper~\cite{Junqueira2011jc}).

Choosing $|r| = 0$ or $|w| = 0$ means the operation can complete without waiting for any network
communication (it may still send messages, but need not wait for a response from other nodes). The
latency of such an operation thus only depends on the local database algorithms: it might be
constant-time $O(1)$, or it might be $O(\log n)$ where $n$ is the size of the database, but either
way it is independent of the network delay $d$, so we call it \emph{delay-independent}.

In table~\ref{tab:op-latency}, sequential consistency is described as having fast reads and slow
writes (constant-time reads, and write latency proportional to network delay), although these roles
can be swapped if an application prefers fast writes and slow reads.

\subsubsection{Causal consistency}

If sequential consistency allows the latency of \emph{some} operations to be independent of network
delay, which level of consistency allows \emph{all} operation latencies to be independent of the
network? Recent results~\cite{Attiya2015dm, Mahajan2011wz} show that causal
consistency~\cite{Ahamad1995gl} with eventual convergence is the strongest possible consistency
guarantee with this property.\footnote{There are a few variants of causal consistency, such as
\emph{real time causal}~\cite{Mahajan2011wz}, \emph{causal+}~\cite{Lloyd2011hz} and
\emph{observable causal}~\cite{Attiya2015dm} consistency. They have subtle differences, but we do
not have space in this paper to compare them in detail.}

Read Your Writes~\cite{Terry1994fp}, PRAM~\cite{Lipton1988uh} and other weak consistency models (all
the way down to eventual consistency, which provides no safety property~\cite{Bailis2013jc}) are
weaker than causal consistency, and thus achievable without waiting for the network.

If tolerance of network delay is the only consideration, causal consistency is the optimal
consistency level. There may be other reasons for choosing weaker consistency levels (for example,
the metadata overhead of tracking causality~\cite{CharronBost1991ec, Attiya2015dm}), but these
trade-offs are outside of the scope of this discussion, as they are also outside the scope of CAP.

\subsection{Heterogeneous delays}

A limitation of the results in section~\ref{sec:network-dependence} is that they assume the
distribution of network delays is the same between every pair of nodes. This assumption is not true
in general: for example, network delay between nodes in the same datacenter is likely to be much
lower than between geographically distributed nodes communicating over WAN links.

If we model network faults as periods of increased network delay
(section~\ref{sec:latency-availability}), then a network partition is a situation in which the delay
between nodes within each partition remains small, while the delay across partitions increases
dramatically (up to the duration of the partition).

For $O(d)$ algorithms, which of these different delays do we need to assume for $d$? The answer
depends on the communication pattern of the algorithm.

\subsubsection{Modeling network topology}

For example, a replication algorithm that uses a single leader or primary node requires all write
requests to contact the primary, and thus $d$ in this case is the network delay between the client
and the leader (possibly via other nodes). In a geographically distributed system, if client and
leader are in different locations, $d$ includes WAN links. If the client is temporarily partitioned
from the leader, $d$ increases up to the duration of the partition.

By contrast, the ABD algorithm~\cite{Attiya1995bm} waits for responses from a majority of replicas,
so $d$ is the largest among the majority of replicas that are fastest to respond. If a minority of
replicas is temporarily partitioned from the client, the operation latency remains independent of
the partition duration.

Another possibility is to treat network delay within the same datacenter, $d_\textit{local}$,
differently from network delay over WAN links, $d_\textit{remote}$, because usually
$d_\textit{local} \ll d_\textit{remote}$. Systems such as COPS~\cite{Lloyd2011hz}, which place a
leader in each datacenter, provide linearizable operations within one datacenter (requiring
$O(d_\textit{local})$ latency), and causal consistency across datacenters (making the request
latency independent of $d_\textit{remote}$).

\subsection{Delay-independent operations}\label{sec:disconnected}

The big-$O$ notation for operation latency ignores constant factors (such as the number of network
round-trips required by an algorithm), but it captures the essence of what we need to know for
building systems that can tolerate network faults: what happens if network delay dramatically
degrades? In a delay-sensitive $O(d)$ algorithm, operation latency may increase to be as large as
the duration of the network interruption (i.e.\ minutes or even hours), whereas a delay-independent
$O(1)$ algorithm remains unaffected.

If the SLA calls for operation latencies that are significantly shorter than the expected duration
of network interruptions, delay-independent algorithms are required. In such algorithms, the time
until replica convergence is still proportional to $d$, but convergence is decoupled from operation
latency. Put another way, delay-independent algorithms support \emph{disconnected} or \emph{offline
operation}. Disconnected operation has long been used in network file systems~\cite{Kistler1992bt}
and automatic teller machines~\cite{Brewer2012tr}.

For example, consider a calendar application running on a mobile device: a user may travel through a
tunnel or to a remote location where there is no cellular network coverage. For a mobile device,
regular network interruptions are expected, and they may last for days. During this time, the user
should still be able to interact with the calendar app, checking their schedule and adding events
(with any changes asynchronously propagated when an internet connection is next available).

However, even in environments with fast and reliable network connectivity, delay-independent
algorithms have been shown to have performance and scalability benefits: in this context, they are
known as \emph{coordination-free}~\cite{Bailis2014th} or \emph{ALPS}~\cite{Lloyd2011hz} systems.
Many popular database integrity constraints can be implemented without synchronous coordination
between replicas~\cite{Bailis2014th}.

\subsection{Proposed terminology}\label{sec:terminology}

Much of the confusion around CAP is due to the ambiguous, counter-intuitive and contradictory
definitions of terms such as \emph{availability}, as discussed in section~\ref{sec:definitions}. In
order to improve the situation and reduce misunderstandings, there is a need to standardize
terminology with simple, formal and correct definitions that match the intuitions of practitioners.

Building upon the observations above and the results cited in section~\ref{sec:network-dependence},
we propose the following definitions as a first draft of a \emph{delay-sensitivity framework} for
reasoning about consistency and availability trade-offs. These definitions are informal and intended
as a starting point for further discussion.

\begin{description}
\item[Availability] is an \emph{empirical metric}, not a property of an algorithm. It is defined as
    the percentage of successful requests (returning a non-error response within a predefined
    latency bound) over some period of system operation.
\item[Delay-sensitive] describes algorithms or operations that need to wait for network
    communication to complete, i.e.\ which have latency proportional to network delay. The opposite
    is \emph{delay-independent}. Systems must specify the nature of the sensitivity (e.g.\ an
    operation may be sensitive to intra-datacenter delay but independent of inter-datacenter delay).
    A fully delay-independent system supports \emph{disconnected (offline) operation}.
\item[Network faults] encompass packet loss (both transient and long-lasting) and unusually large
    packet delay. \emph{Network partitions} are just one particular type of network fault; in most
    cases, systems should plan for all kinds of network fault, and not only partitions. As long as
    lost packets or failed requests are retried, they can be modeled as large network delay.
\item[Fault tolerance] is used in preference to \emph{high availability} or \emph{partition
    tolerance}. The maximum fault that can be tolerated must be specified (e.g.\ ``the algorithm can
    tolerate up to a minority of replicas crashing or disconnecting''), and the description must
    also state what happens if more faults occur than the system can tolerate (e.g.\ all requests
    return an error, or a consistency property is violated).
\item[Consistency] refers to a spectrum of different \emph{consistency models} (including
    linearizability and causal consistency), not one particular consistency model. When a particular
    consistency model such as linearizability is intended, it is referred to by its usual name. The
    term \emph{strong consistency} is vague, and may refer to linearizability, sequential
    consistency or one-copy serializability.
\end{description}

% Brewer states that ``there is little reason to forfeit C or A when the system is not
% partitioned''~\cite{Brewer2012ba}, yet many distributed datastores (e.g.\ Dynamo, Riak, Cassandra
% and Voldemort) don't provide linearizabile consistency, even in the absence of
% partitions~\cite{Abadi2012hb}. The reduced consistency allows these systems to offer lower latency,
% and avoids the difficult problem of detecting whether the cluster is currently partitioned (which
% would be required in order to have a ``partition mode'' that is distinct from normal operation).
% Thus, these datastores that optimize for availability in the CAP sense can be thought to be always
% operating in ``partition mode''.

\section{Conclusion}

In this paper we discussed several problems with the CAP theorem: the definitions of consistency,
availability and partition tolerance in the literature are somewhat contradictory and
counter-intuitive, and the distinction that CAP draws between ``strong'' and ``eventual''
consistency models is less clear than widely believed.

CAP has nevertheless been very influential in the design of distributed data systems. It deserves
credit for catalyzing the exploration of the design space of systems with weak consistency
guarantees, e.g.\ in the NoSQL movement. However, we believe that CAP has now reached the end of its
usefulness; we recommend that it should be relegated to the history of distributed systems, and no
longer be used for justifying design decisions.

As an alternative to CAP, we propose a simple \emph{delay-sensitivity} framework for reasoning about
trade-offs between consistency guarantees and tolerance of network faults in a replicated database.
Every operation is categorized as either $O(d)$, if its latency is sensitive to network delay, or
$O(1)$, if it is independent of network delay. On the assumption that lost messages are
retransmitted an unbounded number of times, we can model network faults (including partitions) as
periods of greatly increased delay. The algorithm's sensitivity to network delay determines whether
the system can still meet its service level agreement (SLA) when a network fault occurs.

The actual sensitivity of a system to network delay depends on its implementation, but -- in keeping
with the goal of CAP -- we can prove that certain levels of consistency cannot be achieved without
making operation latency proportional to network delay. These theoretical lower bounds are
summarized in Table~\ref{tab:op-latency}. We have not proved any new results in this paper, but
merely drawn on existing distributed systems research dating mostly from the 1990s (and thus
predating CAP).

For future work, it would be interesting to model the \emph{probability distribution} of latencies
for different concurrency control and replication algorithms (e.g.\ by extending
PBS~\cite{Bailis2012to}), rather than modeling network delay as just a single number $d$. It would
also be interesting to model the network communication topology of distributed algorithms more
explicitly.

We hope that by being more rigorous about the implications of different consistency levels on
performance and fault tolerance, we can encourage designers of distributed data systems to continue
the exploration of the design space. And we also hope that by adopting simple, correct and intuitive
terminology, we can help guide application developers towards the storage technologies that are most
appropriate for their use cases.

\subsection*{Acknowledgements}

Many thanks to Niklas Ekström, Seth Gilbert and Henry Robinson for fruitful discussions around this topic.

{\footnotesize
\bibliographystyle{plainnat}
\bibliography{references}{}}

\begin{thebibliography}{50}
\providecommand{\natexlab}[1]{#1}
\providecommand{\url}[1]{\texttt{#1}}
\expandafter\ifx\csname urlstyle\endcsname\relax
  \providecommand{\doi}[1]{doi: #1}\else
  \providecommand{\doi}{doi: \begingroup \urlstyle{rm}\Url}\fi

\bibitem[Aer(2014)]{Aerospike2014wa}
{ACID} support in {Aerospike}.
\newblock Aerospike, Inc., June 2014.
\newblock URL
  \url{http://www.aerospike.com/docs/architecture/assets/AerospikeACIDSupport.pdf}.

\bibitem[Abadi(2012)]{Abadi2012hb}
Daniel~J Abadi.
\newblock Consistency tradeoffs in modern distributed database system design.
\newblock \emph{IEEE Computer Magazine}, 45\penalty0 (2):\penalty0 37--42,
  February 2012.
\newblock \doi{10.1109/MC.2012.33}.

\bibitem[Adya(1999)]{Adya1999tx}
Atul Adya.
\newblock \emph{Weak Consistency: A Generalized Theory and Optimistic
  Implementations for Distributed Transactions}.
\newblock PhD thesis, Massachusetts Institute of Technology, March 1999.

\bibitem[Ahamad et~al.(1995)Ahamad, Neiger, Burns, Kohli, and
  Hutto]{Ahamad1995gl}
Mustaque Ahamad, Gil Neiger, James~E Burns, Prince Kohli, and Phillip~W Hutto.
\newblock Causal memory: definitions, implementation, and programming.
\newblock \emph{Distributed Computing}, 9\penalty0 (1):\penalty0 37--49, March
  1995.
\newblock \doi{10.1007/BF01784241}.

\bibitem[Alpern and Schneider(1985)]{Alpern1985dg}
Bowen Alpern and Fred~B Schneider.
\newblock Defining liveness.
\newblock \emph{Information Processing Letters}, 21\penalty0 (4):\penalty0
  181--185, October 1985.
\newblock \doi{10.1016/0020-0190(85)90056-0}.

\bibitem[Attiya and Welch(1994)]{Attiya1994gw}
Hagit Attiya and Jennifer~L Welch.
\newblock Sequential consistency versus linearizability.
\newblock \emph{ACM Transactions on Computer Systems (TOCS)}, 12\penalty0
  (2):\penalty0 91--122, May 1994.
\newblock \doi{10.1145/176575.176576}.

\bibitem[Attiya et~al.(1995)Attiya, Bar-Noy, and Dolev]{Attiya1995bm}
Hagit Attiya, Amotz Bar-Noy, and Danny Dolev.
\newblock Sharing memory robustly in message-passing systems.
\newblock \emph{Journal of the ACM}, 42\penalty0 (1):\penalty0 124--142,
  January 1995.
\newblock \doi{10.1145/200836.200869}.

\bibitem[Attiya et~al.(2015)Attiya, Ellen, and Morrison]{Attiya2015dm}
Hagit Attiya, Faith Ellen, and Adam Morrison.
\newblock Limitations of highly-available eventually-consistent data stores.
\newblock In \emph{ACM Symposium on Principles of Distributed Computing
  (PODC)}, July 2015.
\newblock \doi{10.1145/2767386.2767419}.

\bibitem[Bailis and Ghodsi(2013)]{Bailis2013jc}
Peter Bailis and Ali Ghodsi.
\newblock Eventual consistency today: Limitations, extensions, and beyond.
\newblock \emph{ACM Queue}, 11\penalty0 (3), March 2013.
\newblock \doi{10.1145/2460276.2462076}.

\bibitem[Bailis and Kingsbury(2014)]{Bailis2014jx}
Peter Bailis and Kyle Kingsbury.
\newblock The network is reliable.
\newblock \emph{ACM Queue}, 12\penalty0 (7), July 2014.
\newblock \doi{10.1145/2639988.2639988}.

\bibitem[Bailis et~al.(2012)Bailis, Venkataraman, Franklin, Hellerstein, and
  Stoica]{Bailis2012to}
Peter Bailis, Shivaram Venkataraman, Michael~J Franklin, Joseph~M Hellerstein,
  and Ion Stoica.
\newblock Probabilistically bounded staleness for practical partial quorums.
\newblock \emph{Proceedings of the VLDB Endowment}, 5\penalty0 (8):\penalty0
  776--787, April 2012.

\bibitem[Bailis et~al.(2014{\natexlab{a}})Bailis, Davidson, Fekete, Ghodsi,
  Hellerstein, and Stoica]{Bailis2014vc}
Peter Bailis, Aaron Davidson, Alan Fekete, Ali Ghodsi, Joseph~M Hellerstein,
  and Ion Stoica.
\newblock Highly available transactions: Virtues and limitations.
\newblock In \emph{40th International Conference on Very Large Data Bases
  (VLDB)}, September 2014{\natexlab{a}}.

\bibitem[Bailis et~al.(2014{\natexlab{b}})Bailis, Fekete, Franklin, Ghodsi,
  Hellerstein, and Stoica]{Bailis2014th}
Peter Bailis, Alan Fekete, Michael~J Franklin, Ali Ghodsi, Joseph~M
  Hellerstein, and Ion Stoica.
\newblock Coordination-avoiding database systems.
\newblock \emph{Proceedings of the VLDB Endowment}, 8\penalty0 (3):\penalty0
  185--196, November 2014{\natexlab{b}}.

\bibitem[Berenson et~al.(1995)Berenson, Bernstein, Gray, Melton, O'Neil, and
  O'Neil]{Berenson1995kj}
Hal Berenson, Philip~A Bernstein, Jim~N Gray, Jim Melton, Elizabeth O'Neil, and
  Patrick O'Neil.
\newblock A critique of {ANSI SQL} isolation levels.
\newblock In \emph{ACM International Conference on Management of Data
  (SIGMOD)}, May 1995.
\newblock \doi{10.1145/568271.223785}.

\bibitem[Bernstein et~al.(1987)Bernstein, Hadzilacos, and
  Goodman]{Bernstein1987va}
Philip~A Bernstein, Vassos Hadzilacos, and Nathan Goodman.
\newblock \emph{Concurrency Control and Recovery in Database Systems}.
\newblock Addison-Wesley, 1987.
\newblock ISBN 0201107155.
\newblock URL
  \url{http://research.microsoft.com/en-us/people/philbe/ccontrol.aspx}.

\bibitem[Brewer(2000)]{Brewer2000vd}
Eric~A Brewer.
\newblock Towards robust distributed systems (keynote).
\newblock In \emph{19th ACM Symposium on Principles of Distributed Computing
  (PODC)}, July 2000.

\bibitem[Brewer(2012{\natexlab{a}})]{Brewer2012ba}
Eric~A Brewer.
\newblock {CAP} twelve years later: How the ``rules'' have changed.
\newblock \emph{IEEE Computer Magazine}, 45\penalty0 (2):\penalty0 23--29,
  February 2012{\natexlab{a}}.
\newblock \doi{10.1109/MC.2012.37}.

\bibitem[Brewer(2012{\natexlab{b}})]{Brewer2012tr}
Eric~A Brewer.
\newblock {NoSQL}: Past, present, future.
\newblock In \emph{QCon San Francisco}, November 2012{\natexlab{b}}.
\newblock URL \url{http://www.infoq.com/presentations/NoSQL-History}.

\bibitem[Cachin et~al.(2011)Cachin, Guerraoui, and Rodrigues]{Cachin2011wt}
Christian Cachin, Rachid Guerraoui, and Lu{\'\i}s Rodrigues.
\newblock \emph{Introduction to Reliable and Secure Distributed Programming}.
\newblock Springer, 2nd edition, February 2011.
\newblock ISBN 978-3-642-15259-7.

\bibitem[Charron-Bost(1991)]{CharronBost1991ec}
Bernadette Charron-Bost.
\newblock Concerning the size of logical clocks in distributed systems.
\newblock \emph{Information Processing Letters}, 39\penalty0 (1):\penalty0
  11--16, July 1991.
\newblock \doi{10.1016/0020-0190(91)90055-M}.

\bibitem[Darcy(2010)]{Darcy2010ta}
Jeff Darcy.
\newblock When partitions attack, October 2010.
\newblock URL
  \url{http://pl.atyp.us/wordpress/index.php/2010/10/when-partitions-attack/}.

\bibitem[Davidson et~al.(1985)Davidson, Garcia-Molina, and
  Skeen]{Davidson1985hv}
Susan~B Davidson, Hector Garcia-Molina, and Dale Skeen.
\newblock Consistency in partitioned networks.
\newblock \emph{ACM Computing Surveys}, 17\penalty0 (3):\penalty0 341--370,
  September 1985.
\newblock \doi{10.1145/5505.5508}.

\bibitem[Fischer and Michael(1982)]{Fischer1982hc}
Michael~J Fischer and Alan Michael.
\newblock Sacrificing serializability to attain high availability of data in an
  unreliable network.
\newblock In \emph{1st ACM Symposium on Principles of Database Systems (PODS)},
  pages 70--75, March 1982.
\newblock \doi{10.1145/588111.588124}.

\bibitem[Fox and Brewer(1999)]{Fox1999bs}
Armando Fox and Eric~A Brewer.
\newblock Harvest, yield, and scalable tolerant systems.
\newblock In \emph{7th Workshop on Hot Topics in Operating Systems (HotOS)},
  pages 174--178, March 1999.
\newblock \doi{10.1109/HOTOS.1999.798396}.

\bibitem[Gilbert and Lynch(2002)]{Gilbert2002il}
Seth Gilbert and Nancy Lynch.
\newblock Brewer's conjecture and the feasibility of consistent, available,
  partition-tolerant web services.
\newblock \emph{ACM SIGACT News}, 33\penalty0 (2):\penalty0 51--59, 2002.
\newblock \doi{10.1145/564585.564601}.

\bibitem[Gilbert and Lynch(2012)]{Gilbert2012bf}
Seth Gilbert and Nancy Lynch.
\newblock Perspectives on the {CAP} theorem.
\newblock \emph{IEEE Computer Magazine}, 45\penalty0 (2):\penalty0 30--36,
  February 2012.
\newblock \doi{10.1109/MC.2011.389}.

\bibitem[Gray et~al.(1976)Gray, Lorie, Putzolu, and Traiger]{Gray1976us}
Jim~N Gray, Raymond~A Lorie, Gianfranco~R Putzolu, and Irving~L Traiger.
\newblock Granularity of locks and degrees of consistency in a shared data
  base.
\newblock In G~M Nijssen, editor, \emph{Modelling in Data Base Management
  Systems: Proceedings of the IFIP Working Conference on Modelling in Data Base
  Management Systems}, pages 364--394. Elsevier/North Holland Publishing, 1976.

\bibitem[Hale(2010)]{Hale2010we}
Coda Hale.
\newblock You can't sacrifice partition tolerance, October 2010.
\newblock URL
  \url{http://codahale.com/you-cant-sacrifice-partition-tolerance/}.

\bibitem[Herlihy(1988)]{Herlihy1988br}
Maurice~P Herlihy.
\newblock Impossibility and universality results for wait-free synchronization.
\newblock In \emph{7th ACM Symposium on Principles of Distributed Computing
  (PODC)}, pages 276--290, August 1988.
\newblock \doi{10.1145/62546.62593}.

\bibitem[Herlihy and Wing(1990)]{Herlihy1990jq}
Maurice~P Herlihy and Jeannette~M Wing.
\newblock Linearizability: A correctness condition for concurrent objects.
\newblock \emph{ACM Transactions on Programming Languages and Systems},
  12\penalty0 (3):\penalty0 463--492, July 1990.
\newblock \doi{10.1145/78969.78972}.

\bibitem[Hodges(2013)]{Hodges2013tj}
Jeff Hodges.
\newblock Notes on distributed systems for young bloods, January 2013.
\newblock URL
  \url{http://www.somethingsimilar.com/2013/01/14/notes-on-distributed-systems-for-young-bloods/}.

\bibitem[Johnson and Thomas(1975)]{Johnson1975we}
Paul~R Johnson and Robert~H Thomas.
\newblock {RFC} 677: The maintenance of duplicate databases.
\newblock Network Working Group, January 1975.
\newblock URL \url{https://tools.ietf.org/html/rfc677}.

\bibitem[Junqueira et~al.(2011)Junqueira, Reed, and Serafini]{Junqueira2011jc}
Flavio~P Junqueira, Benjamin~C Reed, and Marco Serafini.
\newblock {Zab}: High-performance broadcast for primary-backup systems.
\newblock In \emph{41st IEEE International Conference on Dependable Systems and
  Networks (DSN)}, pages 245--256, June 2011.
\newblock \doi{10.1109/DSN.2011.5958223}.

\bibitem[Kim(1984)]{Kim1984bl}
Won Kim.
\newblock Highly available systems for database application.
\newblock \emph{ACM Computing Surveys}, 16\penalty0 (1):\penalty0 71--98, March
  1984.
\newblock \doi{10.1145/861.866}.

\bibitem[Kingsbury(2014)]{Kingsbury2014tk}
Kyle Kingsbury.
\newblock Call me maybe: {RabbitMQ}, June 2014.
\newblock URL \url{https://aphyr.com/posts/315-call-me-maybe-rabbitmq}.

\bibitem[Kingsbury(2015)]{Kingsbury2015uk}
Kyle Kingsbury.
\newblock Call me maybe: {Elasticsearch} 1.5.0, April 2015.
\newblock URL
  \url{https://aphyr.com/posts/323-call-me-maybe-elasticsearch-1-5-0}.

\bibitem[Kistler and Satyanarayanan(1992)]{Kistler1992bt}
James~J Kistler and M~Satyanarayanan.
\newblock Disconnected operation in the {Coda} file system.
\newblock \emph{ACM Transactions on Computer Systems (TOCS)}, 10\penalty0
  (1):\penalty0 3--25, February 1992.
\newblock \doi{10.1145/146941.146942}.

\bibitem[Lamport(1979)]{Lamport1979ky}
Leslie Lamport.
\newblock How to make a multiprocessor computer that correctly executes
  multiprocess programs.
\newblock \emph{IEEE Transactions on Computers}, 28\penalty0 (9):\penalty0
  690--691, September 1979.
\newblock \doi{10.1109/TC.1979.1675439}.

\bibitem[Lamport(2000)]{Lamport2000bi}
Leslie Lamport.
\newblock Fairness and hyperfairness.
\newblock \emph{Distributed Computing}, 13\penalty0 (4):\penalty0 239--245,
  November 2000.
\newblock \doi{10.1007/PL00008921}.

\bibitem[Lindsay et~al.(1979)Lindsay, Selinger, Galtieri, Gray, Lorie, Price,
  Putzolu, Traiger, and Wade]{Lindsay1979wv}
Bruce~G Lindsay, Patricia~Griffiths Selinger, C~Galtieri, Jim~N Gray, Raymond~A
  Lorie, Thomas~G Price, Gianfranco~R Putzolu, Irving~L Traiger, and Bradford~W
  Wade.
\newblock Notes on distributed databases.
\newblock Technical Report RJ2571(33471), IBM Research, July 1979.

\bibitem[Liochon(2015)]{Liochon2015vt}
Nicolas Liochon.
\newblock You do it too: Forfeiting network partition tolerance in distributed
  systems, July 2015.
\newblock URL
  \url{http://blog.thislongrun.com/2015/07/Forfeit-Partition-Tolerance-Distributed-System-CAP-Theorem.html}.

\bibitem[Lipton and Sandberg(1988)]{Lipton1988uh}
Richard~J Lipton and Jonathan~S Sandberg.
\newblock {PRAM}: A scalable shared memory.
\newblock Technical Report CS-TR-180-88, Princeton University Department of
  Computer Science, September 1988.

\bibitem[Lloyd et~al.(2011)Lloyd, Freedman, Kaminsky, and
  Andersen]{Lloyd2011hz}
Wyatt Lloyd, Michael~J Freedman, Michael Kaminsky, and David~G Andersen.
\newblock Don't settle for eventual: Scalable causal consistency for wide-area
  storage with {COPS}.
\newblock In \emph{23rd ACM Symposium on Operating Systems Principles (SOSP)},
  pages 401--416, October 2011.
\newblock \doi{10.1145/2043556.2043593}.

\bibitem[Lynch and Shvartsman(1997)]{Lynch1997gr}
Nancy Lynch and Alex Shvartsman.
\newblock Robust emulation of shared memory using dynamic quorum-acknowledged
  broadcasts.
\newblock In \emph{27th Annual International Symposium on Fault-Tolerant
  Computing (FTCS)}, pages 272--281, June 1997.
\newblock \doi{10.1109/FTCS.1997.614100}.

\bibitem[Mahajan et~al.(2011)Mahajan, Alvisi, and Dahlin]{Mahajan2011wz}
Prince Mahajan, Lorenzo Alvisi, and Mike Dahlin.
\newblock Consistency, availability, and convergence.
\newblock Technical Report UTCS TR-11-22, University of Texas at Austin,
  Department of Computer Science, May 2011.

\bibitem[Mavronicolas and Roth(1999)]{Mavronicolas1999eb}
Marios Mavronicolas and Dan Roth.
\newblock Linearizable read/write objects.
\newblock \emph{Theoretical Computer Science}, 220\penalty0 (1):\penalty0
  267--319, June 1999.
\newblock \doi{10.1016/S0304-3975(98)90244-4}.

\bibitem[Robinson(2010)]{Robinson2010tp}
Henry Robinson.
\newblock {CAP} confusion: Problems with `partition tolerance', April 2010.
\newblock URL
  \url{http://blog.cloudera.com/blog/2010/04/cap-confusion-problems-with-partition-tolerance/}.

\bibitem[Sewell et~al.(2010)Sewell, Sarkar, Owens, Zappa~Nardelli, and
  Myreen]{Sewell2010fj}
Peter Sewell, Susmit Sarkar, Scott Owens, Francesco Zappa~Nardelli, and
  Magnus~O Myreen.
\newblock {x86-TSO}: A rigorous and usable programmer's model for {x86}
  multiprocessors.
\newblock \emph{Communications of the ACM}, 53\penalty0 (7):\penalty0 89--97,
  July 2010.
\newblock \doi{10.1145/1785414.1785443}.

\bibitem[Terry et~al.(1994)Terry, Demers, Petersen, Spreitzer, Theimer, and
  Welch]{Terry1994fp}
Douglas~B Terry, Alan~J Demers, Karin Petersen, Mike~J Spreitzer, Marvin~M
  Theimer, and Brent~B Welch.
\newblock Session guarantees for weakly consistent replicated data.
\newblock In \emph{3rd International Conference on Parallel and Distributed
  Information Systems (PDIS)}, pages 140--149, September 1994.
\newblock \doi{10.1109/PDIS.1994.331722}.

\bibitem[Vogels(2008)]{Vogels2008ey}
Werner Vogels.
\newblock Eventually consistent.
\newblock \emph{ACM Queue}, 6\penalty0 (6):\penalty0 14--19, October 2008.
\newblock \doi{10.1145/1466443.1466448}.

\end{thebibliography}
\end{document}